\newcommand{\Thta}{\mathrm{\Theta}}
\newcommand{\Omga}{\mathrm{\Omega}}
\newcommand{\bigO}{\mathrm{O}}
\newcommand{\omga}{\mathrm{\omega}}
\newcommand{\RDelta}{\mathrm{\Delta}}
\newcommand{\ord}{ord}
\newcommand{\unord}{unord}
\begin{document}

\pagestyle{headings}  

\mainmatter              

\title{Heaps Simplified}

\author{Bernhard Haeupler\inst{2} \and Siddhartha Sen\inst{1}$^{,4}$ \and Robert E. Tarjan\inst{1}\inst{,3}$^{,4}$}

\institute{Princeton University, Princeton NJ 08544, \email{\{sssix, ret\}@cs.princeton.edu}
\and
CSAIL, Massachusetts Institute of Technology, \email{haeupler@mit.edu}
\and
HP Laboratories, Palo Alto CA 94304}

\maketitle              

\addtocounter{footnote}{+1}
\footnotetext{Research at Princeton University partially supported by
NSF grants CCF-0830676 and CCF-0832797 and US-Israel Binational
Science Foundation grant 2006204. The information contained herein
does not necessarily reflect the opinion or policy of the federal
government and no official endorsement should be inferred.}

\begin{abstract}

The heap is a basic data structure used in a wide variety of
applications, including shortest path and minimum spanning tree
algorithms.  In this paper we explore the design space of
comparison-based, amortized-efficient heap implementations.  From a
consideration of dynamic single-elimination tournaments, we obtain the
binomial queue, a classical heap implementation, in a simple and
natural way.  We give four equivalent ways of representing heaps
arising from tournaments, and we obtain two new variants of binomial
queues, a one-tree version and a one-pass version.  We extend the
one-pass version to support key decrease operations, obtaining the
{\em rank-pairing heap}, or {\em rp-heap}.  Rank-pairing heaps combine
the performance guarantees of Fibonacci heaps with simplicity
approaching that of pairing heaps.  Like pairing heaps, rank-pairing
heaps consist of trees of arbitrary structure, but these trees are
combined by rank, not by list position, and rank changes, but not
structural changes, cascade during key decrease operations.

\end{abstract}

\section{Introduction} \label{sec:intro}

A {\em meldable heap} (henceforth just a {\em heap}) is a data structure
consisting of a set of items, each with a real-valued key, that
supports the following operations:

\begin{itemize}

\item $make\mbox{-}heap$: return a new, empty heap.

\item $insert(x, H)$: insert item $x$, with predefined key, into heap $H$.

\item $find\mbox{-}min(H)$: return an item in heap $H$ of minimum key.

\item $delete\mbox{-}min(h)$: delete from heap $H$ an item of minimum key,
and return it; return null if the heap is empty.

\item $meld(H_1, H_2)$: return a heap containing all the items in
disjoint heaps $H_1$ and $H_2$, destroying $H_1$ and $H_2$.

\end{itemize}

\noindent Some applications of heaps need either or both of the following
additional operations:

\begin{itemize}

\item $decrease\mbox{-}key(x, \RDelta, H)$: decrease the key of item
$x$ in heap $H$ by amount $\RDelta > 0$, assuming $H$ is the unique
heap containing $x$.

\item $delete(x, H)$: delete item $x$ from heap $H$, assuming $H$ is
the unique heap containing $x$.

\end{itemize}

We shall assume that all keys are distinct; if they are not, we can
break ties using any total order of the items.  We allow only binary
comparisons of keys, and we study the amortized
efficiency~\cite{tarjan1985acc} of heap operations.  To obtain a bound
on amortized efficiency, we assign to each configuration of the data
structure a non-negative {\em potential}, initially zero.  We define
the {\em amortized time} of an operation to be its actual time plus
the change in potential it causes.  Then for any sequence of
operations the sum of the actual times is at most the sum of the
amortized times.

Since $n$ numbers can be sorted by doing $n$ insertions into an
initially empty heap followed by $n$ minimum deletions, the classical
$\Omga(n\log n)$ lower bound~\cite[p. 183]{Knuth1973b} on the number
of comparisons needed for sorting implies that either insertion or
minimum deletion must take $\Omga(\log n)$ amortized time, where $n$
is the number of items currently in the heap.  For simplicity in
stating bounds we assume $n \ge 2$. We investigate simple data
structures such that minimum deletion (or deletion of an arbitrary
item if this operation is supported) takes $\bigO(\log n)$ amortized
time, and each of the other supported heap operations takes $\bigO(1)$
amortized time.  These bounds match the lower bound.  (The logarithmic
lower bound can be beaten by using multiway
branching~\cite{Fredman1994,han2002isn}.)

Many heap implementations have been proposed over the years.  See
e.g. \cite{1328914}.  We mention only those directly related to our
work.  The {\em binomial queue} of Vuillemin~\cite{359478} supports
all the heap operations in $\bigO(\log n)$ worst-case time per
operation.  This structure performs quite well in
practice~\cite{Brown1978}.  Fredman and Tarjan~\cite{Fredman1987}
invented the {\em Fibonacci heap} specifically to support key decrease
operations in $\bigO(1)$ time, which allows efficient implementation
of Dijkstra's shortest path
algorithm~\cite{dijkstra1959ntp,Fredman1987}, Edmonds' minimum
branching algorithm~\cite{Edmonds1967,Gabow1986}, and certain minimum spanning
tree algorithms~\cite{Fredman1987,Gabow1986}.  Fibonacci heaps support
deletion of the minimum or of an arbitrary item in $\bigO(\log n)$
amortized time and the other heap operations in $\bigO(1)$ amortized
time.  They do not perform well in practice,
however~\cite{Liao1992,Moret1994}.  As a result, a variety of
alternatives to Fibonacci heaps have been
proposed~\cite{driscoll1988rha,violation_heaps,hyer1995gti,kaplan1999nhd,1328914,Peterson1987},
including a self-adjusting structure, the {\em pairing
heap}~\cite{fredman1986phn}.

Pairing heaps support all the heap operations in $\bigO(\log n)$
amortized time and were conjectured to support key decrease in
$\bigO(1)$ amortized time. Despite empirical evidence supporting the
conjecture~\cite{Jones1986,Liao1992,Stasko1987}, Fredman~\cite{320214}
showed that it is not true: pairing heaps and related data structures
that do not store subtree size information require $\Omga(\log\log n)$
amortized time per key decrease.  Whether pairing heaps meet this
bound is open; the best upper bound is $\bigO(2^{2\sqrt{\lg\lg
n}})$~\cite{Pettie_pairingheaps}\footnote{We denote by $\lg$ the
base-two logarithm.}.  Very recently ElMasry~\cite{1496822} proposed a
more-complicated alternative to pairing heaps that does have an
$\bigO(\log\log n)$ amortized bound per key decrease.

Fredman's result gives a time-space trade-off between the number of
bits per node used to store subtree size information and the amortized
time per key decrease; at least $\lg\lg n$ bits per node are needed to
obtain $\bigO(1)$ amortized time per key decrease.  Fibonacci heaps
use $\lg\lg n + 2$ bits per node; relaxed heaps~\cite{driscoll1988rha}
use only $\lg\lg n$ bits per node.  All the structures so far proposed
that achieve $\bigO(1)$ amortized time per key decrease do a cascade
of local restructuring operations to keep the underlying trees
balanced.

The bounds of Fibonacci heaps can be obtained in the worst case, but
only by making the data structure more complicated: run-relaxed
heaps~\cite{driscoll1988rha} and fat heaps~\cite{kaplan1999nhd}
achieve these bounds except for melding, which takes $\bigO(\log n)$
time worst-case; a very complicated structure of
Brodal~\cite{brodal1996wce} achieves these bounds for all the heap
operations.

Our goal is to systematically explore the design space of
amortized-efficient heaps and thereby discover the simplest possible
data structures.  As a warm-up, we begin in Section 2 by showing that
binomial queues can be obtained in a natural way by dynamizing
balanced single-elimination tournaments. We give four equivalent ways
of representing heaps arising from tournaments, and we give two new
variants of binomial queues: a one-tree version and a one-pass
version.

Our main result is in Section 3, where we extend our one-pass version
of binomial queues to support key decrease.  Our main insight is that
it is not necessary to maintain balanced trees; all that is needed is
to keep track of tree sizes, which we do by means of ranks.  We call
the resulting data structure a {\em rank-pairing heap}, or {\em
rp-heap}.  The rp-heap achieves the bounds of Fibonacci heaps with
simplicity approaching that of pairing heaps. It resembles the lazy
variant of pairing heaps~\cite[p. 125]{fredman1986phn}, except that
trees are combined by rank, not by list position.  In an rp-heap, rank
changes can cascade but not structural changes, and the trees in the
heap can evolve to have arbitrary structure.  We study two types of
rp-heaps.  One is a little simpler and uses $\lg\lg n$ bits per node
to store ranks, exactly matching Fredman's lower bound, but its
analysis is complicated and yields larger constant factors; the other
uses $\lg\lg n + 1$ bits per node and has small constant factors.

In Section \ref{sec:simpler-kd} we address the question of whether key
decrease can be further simplified. We close in
Section~\ref{sec:remarks} with open problems.

\section{Tournaments as Heaps} \label{sec:tourn}

Given a set of items with real-valued keys, one can determine the item
of minimum key by running a single-elimination tournament on the
items.  To run such a tournament, repeatedly match pairs of items
until only one item remains.  To match two items, compare their keys,
declare the item of smaller key the winner, and eliminate the
loser. Break a tie in keys using a total order of the items.

Such a tournament provides a way to represent a heap that supports
simple and efficient implementations of all the heap operations.  The
winner of the tournament is the item of minimum key.  To insert a new
item into a non-empty tournament, match it against the winner of the
tournament.  To meld two non-empty tournaments, match their winners.
To delete the minimum from a tournament, delete its winner and run a
tournament among the items that lost to the winner.

For the moment we measure efficiency by counting comparisons.  Later
we shall show that our bounds on comparisons also hold for running
time, to within additive and multiplicative constants.  Making a heap
or finding the minimum takes no comparisons, an insertion or a meld
takes one.  The only expensive operation is minimum deletion.  To make
this operation efficient, we restrict matches by using {\em ranks}.
Each item begins with a rank of zero.  After a match between items of
equal rank, the winner's rank increases by one.  Such a match is {\em
fair}.  A match between items of unequal rank is {\em unfair}.  After
an unfair match the winner's rank does not change.  Optionally, if the
winner has rank less than that of the loser, its rank can increase to
any value no greater than the loser's rank.  One rule governs a
tournament: match two items of equal rank if possible. We call such a
tournament \emph{balanced}.  Except for implementation details, this
is the entire description of the data structure.

\begin{lemma}
\label{lem:b-heap-size}
A balanced tournament whose winner has rank $k$ contains at least $2^k$ items.
\end{lemma}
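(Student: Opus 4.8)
The plan is to prove this by induction on $k$, the rank of the tournament's winner. The key observation is the definition of how ranks increase: a winner's rank only increases (by one) after a fair match, that is, a match between two items of equal rank. So if the winner has rank $k > 0$, at some point it won a fair match against an opponent of rank $k-1$ that raised its rank from $k-1$ to $k$.

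First I would set up the base case: when $k = 0$, the tournament contains at least $2^0 = 1$ item, which holds trivially since any nonempty tournament has at least one item. For the inductive step, assume the claim holds for all ranks less than $k$, and consider a tournament whose winner $w$ has rank $k \geq 1$. I would trace back to the fair match in which $w$'s rank rose to $k$. At that moment, $w$ had rank $k-1$ and defeated some item $\ell$ that also had rank $k-1$. By the inductive hypothesis applied to the subtournament that $w$ had won up to that point (whose winner was $w$ with rank $k-1$), that subtournament contained at least $2^{k-1}$ items; likewise, the subtournament that $\ell$ had won (with $\ell$ as its rank-$(k-1)$ winner) contained at least $2^{k-1}$ items. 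These two subtournaments involve disjoint item sets, so together they account for at least $2^{k-1} + 2^{k-1} = 2^k$ items, all of which are in the final tournament.

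The main obstacle, and the point requiring care, is justifying that the inductive hypothesis genuinely applies to the two sub-tournaments, and that their item sets are disjoint. The statement of the lemma refers to a static balanced tournament, whereas the rank-raising event is a dynamic one. I would need to argue that the portion of the tournament tree rooted at $w$ just before the fair match is itself a valid balanced tournament whose winner is $w$ with rank $k-1$, and similarly for the portion rooted at $\ell$; disjointness then follows because an item is eliminated the moment it loses a match, so no item can participate in both subtournaments. The optional rule allowing a winner's rank to increase beyond what a single fair match dictates does not interfere, since that rule only applies to unfair matches and only raises the rank of the already-determined winner, never reducing the count of items beneath a node of a given rank. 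Hence the size bound, which counts items, is preserved, and the induction goes through.
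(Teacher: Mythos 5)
Your induction on the rank $k$ covers only one of the two ways an item's rank can reach $k$, and that omission is fatal. The paper's rules allow an optional rank increase after an \emph{unfair} match: if the winner has rank less than that of the loser, its rank may jump to any value no greater than the loser's rank. Under this rule an item can acquire rank $k$ without ever having held rank $k-1$ and without ever winning a fair match, so your trace-back step (``at some point it won a fair match against an opponent of rank $k-1$'') is simply false for such items. Your closing dismissal of the optional rule does not repair this: the problem is not that the rule ``reduces the count of items beneath a node,'' but that it destroys the structural fact your inductive step relies on, namely the existence of two disjoint subtournaments each won at rank $k-1$.

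Moreover, the missing case cannot be patched inside your induction scheme as stated. When rank $k$ is acquired via the optional rule, the loser has rank at least $k$, and the natural argument is that the loser's tournament already contains at least $2^k$ items, all of which become part of the winner's tournament. But that invokes the lemma for a tournament whose winner has rank $\geq k$, which your inductive hypothesis (ranks strictly below $k$) does not supply. This is precisely why the paper inducts on the \emph{number of matches} rather than on the rank: at the moment of the match, the loser's tournament was assembled from strictly fewer matches, so the match-count inductive hypothesis applies to it regardless of its rank, giving the $\geq 2^k$ items in the paper's second case. To fix your proof, either switch to induction on the number of matches (the paper's route) or otherwise arrange the induction so that the bound can be applied to the loser's already-completed tournament; handling only the fair-match case does not suffice.
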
 

\begin{proof}

The proof is by induction on the number of matches.  The lemma holds
initially.  An item's rank increases to $k$ only when the item has
rank $k - 1$ and it beats an item of rank $k - 1$, or it beats an item
of rank at least $k$. In the former case the two matched items are the
winners of disjoint balanced tournaments, each containing at least
$2^{k-1}$ items.  In the latter case the tournament won by the loser
contains at least $2^k$ items.  After the match the combined
tournament contains at least $2^k$ items. \qed

\end{proof}

In stating bounds we denote by $n$ the number of items in a tournament
or heap; we assume $n > 1$.  To bound the amortized number of
comparisons per heap operation, we define the potential of a balanced
tournament to be the number of unfair matches.  An insertion or meld
increases the potential by at most one and hence takes at most two
amortized comparisons.  The tournament rule and
Lemma~\ref{lem:b-heap-size} guarantee that at most $\lg n$ unfair
matches occur during a minimum deletion, since an unfair match can
only occur when there is at most one item per rank, and by
Lemma~\ref{lem:b-heap-size} the number of ranks is at most $\lg n +
1$.  (The minimum rank is 0; the maximum is $\lfloor \lg n \rfloor$.) 
To analyze a minimum deletion, let $x$ be the deleted winner.
Consider the sequence of matches won by $x$.  Each such match was
either unfair or increased the rank of $x$.  Thus if there were $k$
such unfair matches, the total number of matches won by $x$ was at
least $k$ and at most $k + \lg n$.  Deletion of $x$ eliminates these
matches.  Finding a new winner takes at most $k + \lg n$ matches, of
which at most $\lg n$ are unfair.  Hence the amortized number of
comparisons done by the minimum deletion is at most $k + \lg n - k +
\lg n = 2\lg n$.

It remains to implement the data structure.  The standard way to
represent a tournament is by a full binary tree, whose leaves contain
the items and whose internal nodes represent the matches.  Each
internal node contains the winner of the match and its rank after the
match.  The internal nodes containing an item form a path of the
matches it won.  The tree is {\em heap ordered}: the item in a node
has minimum key among the items in the descendants of the node.  We
call this the {\em full representation}. (See Figure \ref{fig:reps}a.)

\begin{figure}[h]
\centering
\begin{tabular}{ll}
\includegraphics[scale=0.22]{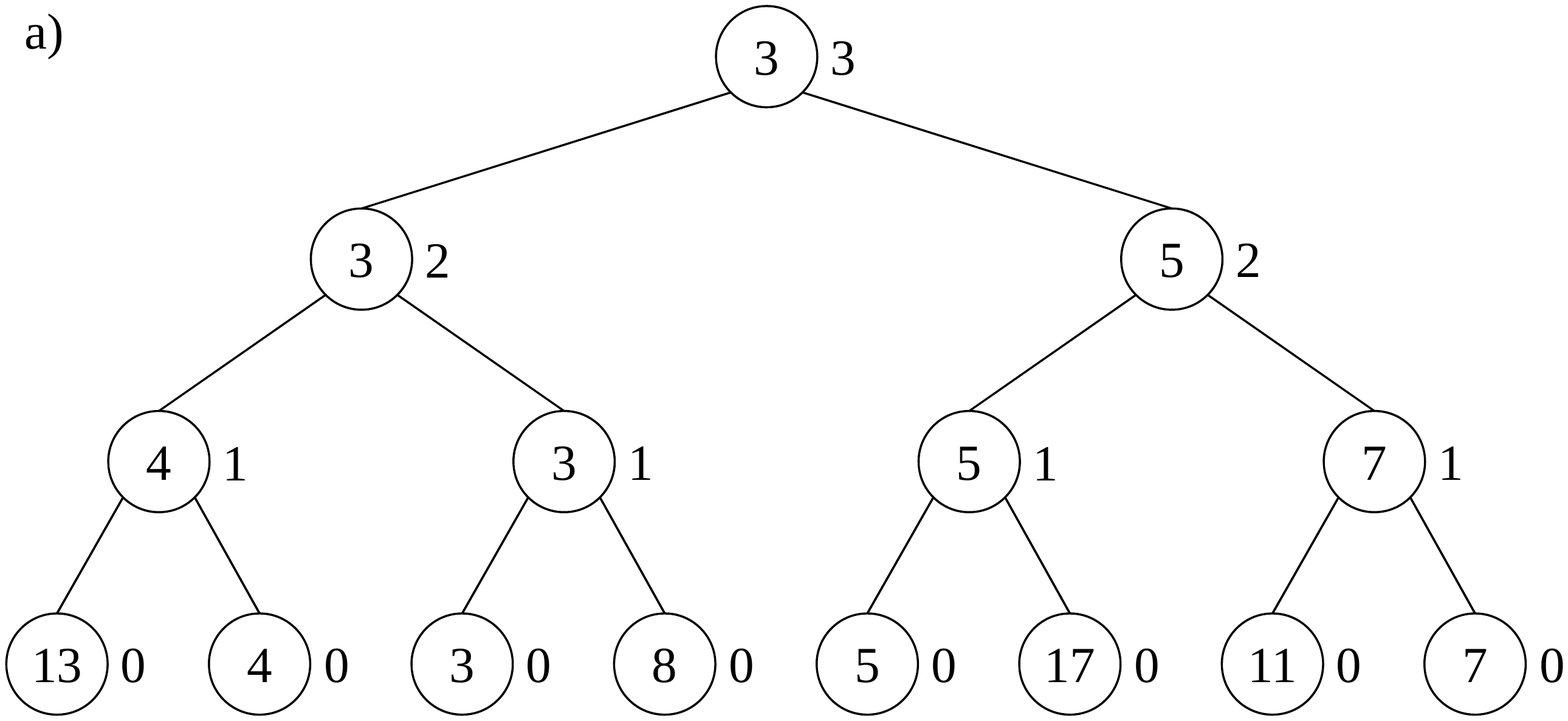} & \hspace{0.1in}
\includegraphics[scale=0.22]{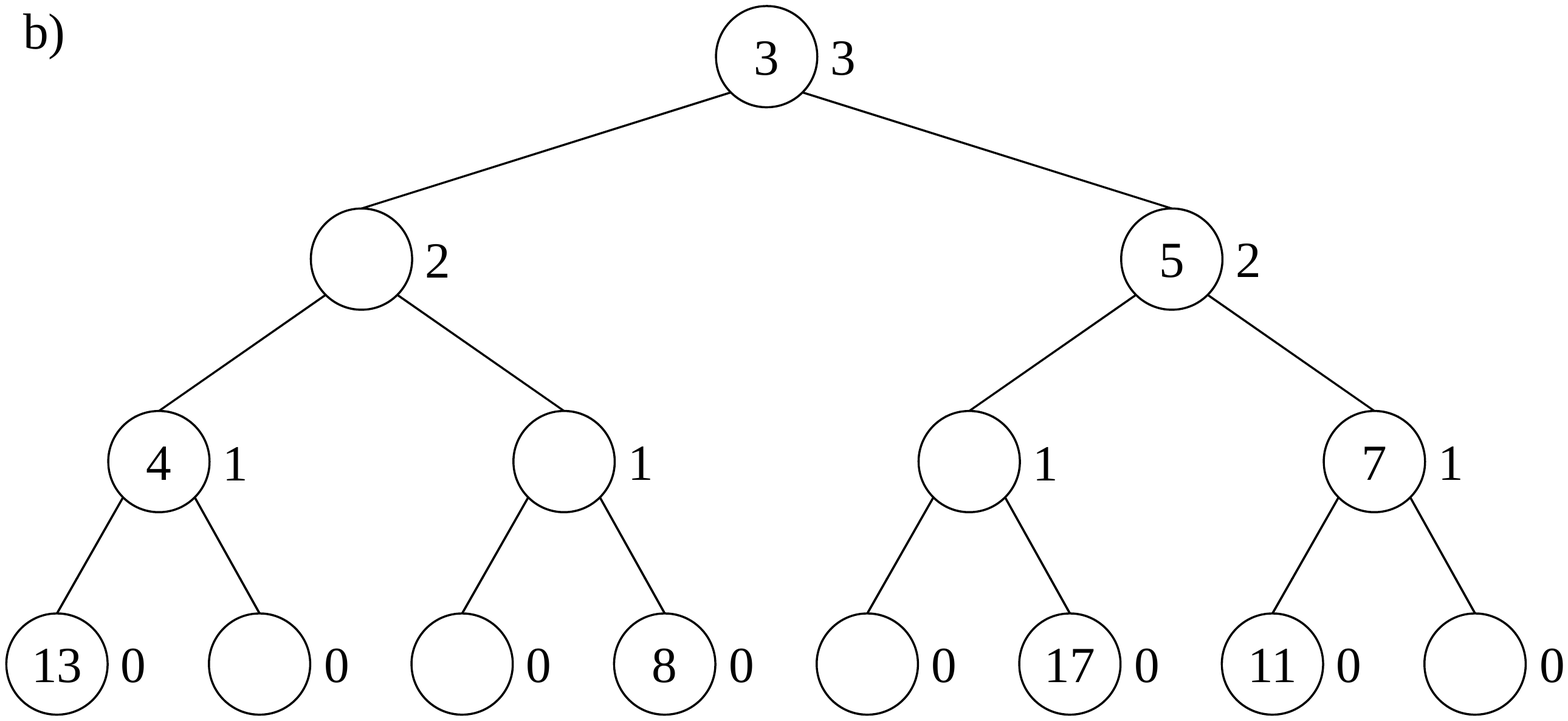} \\
\\
\includegraphics[scale=0.22]{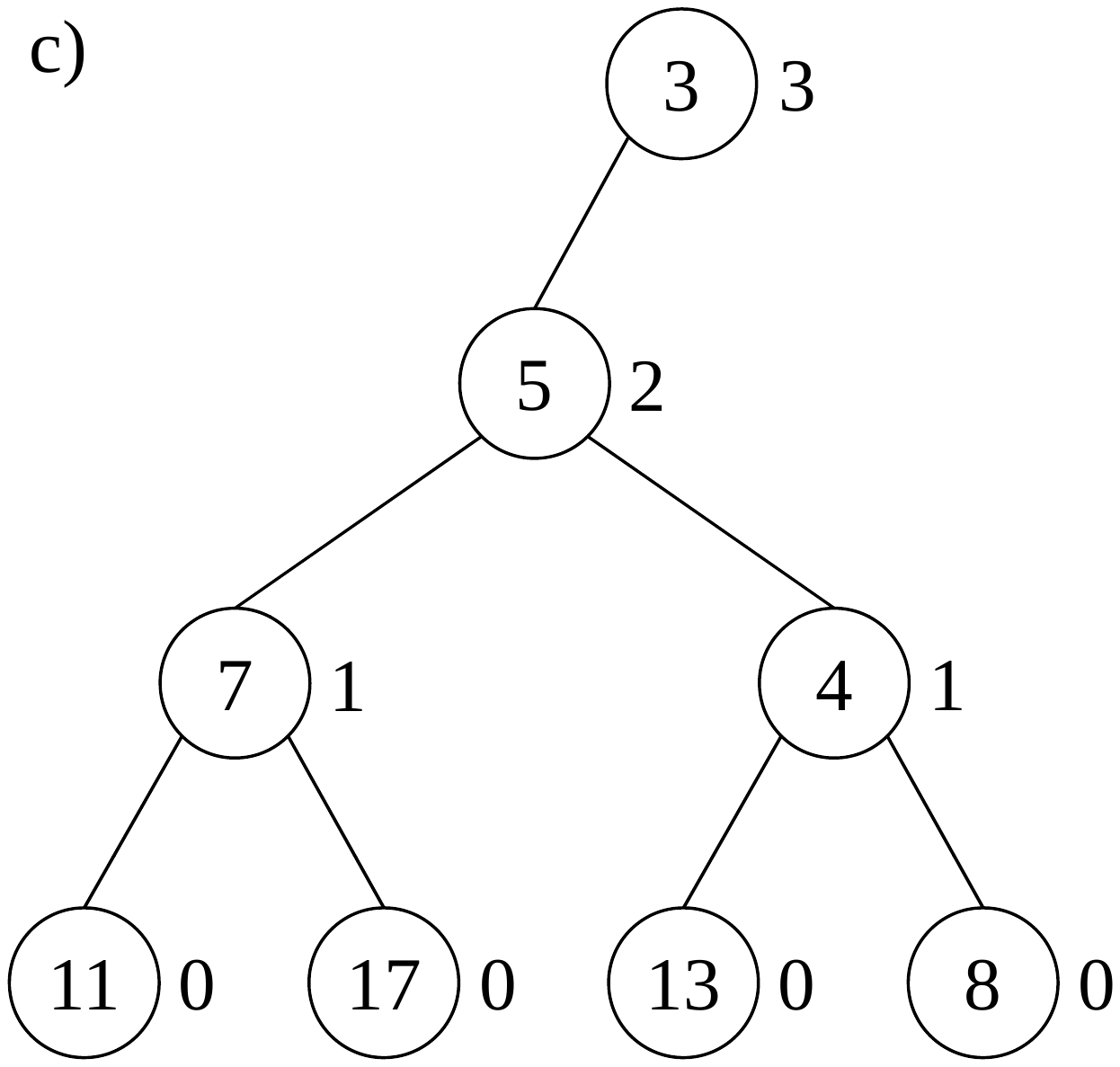} & \hspace{0.1in}
\includegraphics[scale=0.22]{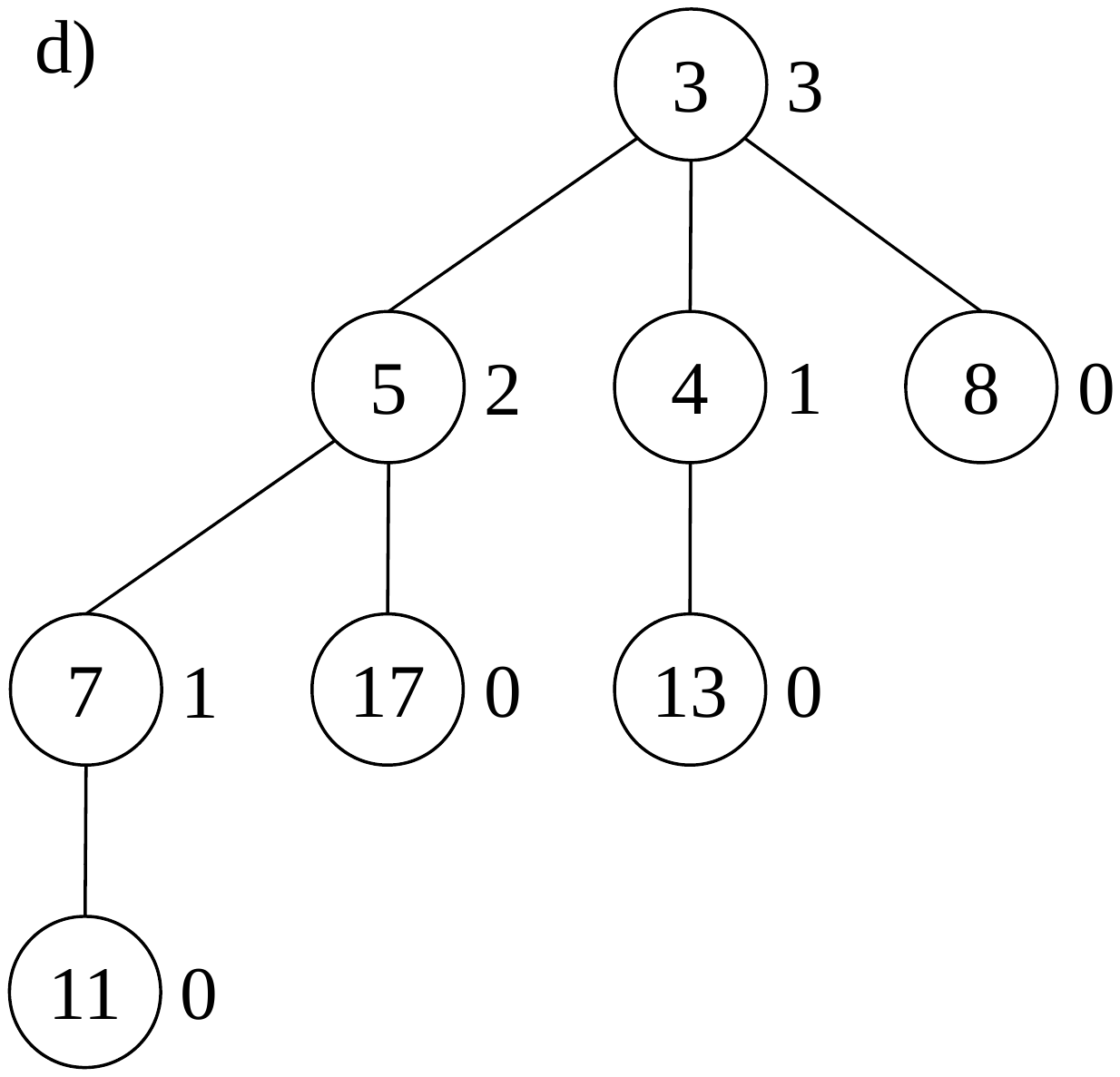}
\end{tabular}
\caption{Four representations of a tournament: a) full, b) half-empty,
c) half-ordered, and d) heap-ordered. The number to the right of each
node is its rank.}
\label{fig:reps}
\end{figure}

This representation uses almost twice as many nodes as necessary.  As
a first step in reducing the space, we remove each item from all but
the highest node containing it.  Our tree is now {\em half empty}: the
root is full, and each parent has two children, one full, the other
empty.  Each node, whether full or empty, has a rank.  This is the
{\em half-empty representation}.  (See Figure \ref{fig:reps}b.)

A half-empty heap-ordered tree represents a tournament transparently,
but it uses as many nodes as the full representation.  We obtain a
more compact representation by eliminating the empty nodes.  To do
this we define the {\em ordered} child and the {\em unordered} child
of a full node to be its full child and the full child of its empty
sibling, respectively.  The tree becomes a {\em half tree}: the root
has an ordered child but no unordered child, every non-root node has
both an ordered child and an unordered child, and any child can be
missing.  We give each child a rank equal to the rank of its parent in
the half-empty representation, minus one.  (If the parent represents a
fair match, this is just the rank of the child itself in the
half-empty representation, but it need not be if the parent represents
an unfair match.  We define the ranks in this way to avoid losing
information about the ranks in the half-empty representation if there
are unfair matches.)  With this definition a leaf has rank zero.  We
adopt the convention that a missing child has rank $-1$ and a root has
rank one larger than that of its child.  (There is no need to
explicitly maintain the ranks of roots.)  The rank of a half tree is
the rank of its root.  The {\em ordered} and {\em unordered subtrees}
of a parent are the subtrees rooted at its ordered and unordered
children, respectively.  Each subtree is {\em half ordered}: the item
in a node has smaller key than those of all items in its ordered
subtree.  The items that lost to an item in a node are those in the
nodes on the path starting from the ordered child of the node and
descending through unordered children.  This is the {\em half-ordered
representation} of a heap. (See Figure \ref{fig:reps}c.)
 
In the half-ordered representation we do not need to move items among
nodes as matches take place, so the items can {\em be} the nodes: the
data structure can be {\em endogenous}~\cite{Tarjan1983}.  Henceforth
we shall assume an endogenous representation.

We obtain our fourth and final representation, the {\em heap-ordered
representation}, by viewing a half tree as the binary tree
representation of a tree~\cite[pp. 332-346]{Knuth1973}: the ordered
and unordered children of a node become its first child and next
sibling, respectively.  The tree is heap-ordered; the children of a
node are the items it defeated, most recent match first.  (See Figure
\ref{fig:reps}d.)

Many heap structures, including binomial queues~\cite{359478},
Fibonacci heaps~\cite{Fredman1987}, pairing
heaps~\cite{fredman1986phn}, and relaxed heaps~\cite{driscoll1988rha},
were originally presented in the heap-ordered representation.  In
their paper on pairing heaps, Fredman et al.~\cite{fredman1986phn}
described the half-ordered representation and observed that it is the
binary tree representation~\cite{Knuth1973} of a heap-ordered tree.
Later, Dutton~\cite{dutton:whd} used the half-ordered representation
in his {\em weak-heap} data structure, and H{\o}yer~\cite{hyer1995gti}
proposed various kinds of half-ordered balanced trees as heaps.
The version of Fredman et al.~\cite{fredman1986phn} matches Knuth's
definition~\cite{Knuth1973}: the left and right children of a node are
its first child and next sibling in the heap-ordered representation,
respectively.  The version of Dutton and Hoyer reverses left and
right.  To avoid confusion we have named the children based on their
roles.  Throughout we denote by $\ord(x)$ and $\unord(x)$ the ordered
and unordered child of $x$, respectively.
 
Among the four representations we prefer the half-ordered
representation because it saves space and simplifies the
implementation of key decrease operations.  To implement balanced
tournaments in the half-ordered representation, we store with each
node its rank and pointers to its ordered and unordered children.  To
match the roots $x$ and $y$ of two half trees, compare their keys,
make the ordered subtree of the winner the unordered subtree of the
loser, and make the new half tree rooted at the loser the ordered
subtree of the winner.  Set the rank of the winner's new child
appropriately.  (This is the rank of the winner in the full
representation, minus one.)  (See Figure \ref{fig:match}.)

\begin{figure}[h]
\centering
\includegraphics[scale=0.25]{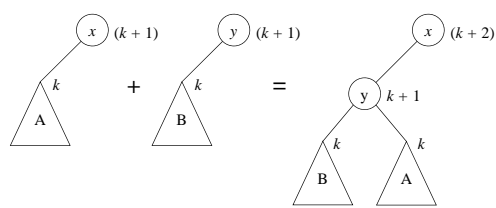}
\caption{A fair match between roots $x$ and $y$ of two half-trees.
Ranks are to the right of nodes.  Ranks of roots (in parentheses) need
not be maintained explicitly.}
\label{fig:match}
\end{figure}

A match of two roots takes $\bigO(1)$ time.  Thus insertion and melding
take $\bigO(1)$ time.  To do a minimum deletion, form the new heap by
starting at the child of the root and walking down through unordered
children.  Detach each node along with its ordered subtree, which
together form a half tree.  For each rank, keep track of a half tree
of that rank.  When a new half tree has the same rank as an existing
one, match their roots, producing a single half tree of one higher
rank.  If there is already a half tree of one higher rank, match its
root and the root of the new half tree.  Continue until the remaining half
tree is the only one of its rank.  Repeat this process for each new
half tree.  Once there is at most one half tree per rank, repeatedly
match any two of the remaining roots until only one root remains.
This method rebuilds the half tree after a minimum deletion in $\bigO(k + \log
n)$ time, where $k$ is the number of unfair matches won by the deleted
winner.

The amortized analysis of the number of comparisons per heap operation
extends to the running time and gives amortized time bounds of $\bigO(1)$
for insertion and melding and $\bigO(\log n)$ for minimum deletion.

A balanced tournament is a one-tree version of a binomial queue.  In
binomial queues, only fair matches are done, and a heap is represented
by a set of half trees rather than by a single half tree.  Fair
matches guarantee that each half tree is {\em perfect}: the ordered
subtree of the root is a perfect binary tree.  During an insertion,
meld, or minimum deletion, fair matches are done until there is at
most one half tree per rank.  Each heap operation takes $\bigO(\log
n)$ time in the worst case.  By allowing unfair matches, we are able
to represent a heap by a single tree.  The cost of this simplicity is
a linear worst-case time bound for minimum deletion.

We obtain another version of binomial queues by avoiding unfair
matches, doing fair matches only during minimum deletions, and running
only a single pass of matches during a minimum deletion.  The
resulting data structure, which we call the {\em one-pass binomial
queue}, has the same amortized efficiency as the one-tree version.
Here are the details.

A one-pass binomial queue consists of a set of heap-ordered perfect
half trees and a pointer to the root of minimum key, the {\em minimum
node}.  We represent the set of half trees by a singly-linked circular
list of the roots, with the minimum node first.  To insert an item
into a heap, create a new, one-node half tree containing the item, add
this half tree to the existing set of half trees, and update the
minimum node.  To meld two heaps, unite their sets of half trees and
update the minimum node.  To delete the minimum in a heap, take apart
the half tree rooted at the minimum node by deleting it and walking
down the path from its child through unordered children, making each
node on the path together with its ordered subtree into a half tree.
Group the half trees into a maximum number of pairs of equal rank and
run the corresponding matches.  (Each half tree except at most one per
rank participates in one match.)  Update the minimum node.

To analyze one-pass binomial queues we define the potential of a heap
to be the number of half trees it contains.  Lemma
\ref{lem:b-heap-size} holds for one-pass binomial queues, so no rank
exceeds $\lg n$.  A make-heap, find-min, or meld operation takes
$\bigO(1)$ time and does not change the potential.  An insertion takes
$\bigO(1)$ time and increases the potential by one.  Thus each of
these operations takes $\bigO(1)$ amortized time.  Consider a minimum
deletion.  Disassembling the half tree rooted at the minimum node
increases the number of trees and the potential by at most $\lg n$.
Let $h$ be the number of half trees after the disassembly but before
the round of fair matches.  The total time for the minimum deletion is
$\bigO(h)$, including the time to pair the trees by rank.  There are
at least $(h - \lg n)/2 - 1$ fair matches, reducing the potential by
at least this amount.  If we scale the running time so that it is at
most $h/2$, then the amortized time of the minimum deletion is
$\bigO(\log n)$.  (Scaling the running time is equivalent to
multiplying the potential by a constant factor.)

The same analysis applies if we do arbitrary additional fair matches
during a minimum deletion.  The extreme case is to continue doing fair
matches until at most one half tree per rank remains, as in the
original version of binomial queues.

\section{Key Decrease and Arbitrary Deletion} \label{sec:key-dec}

Our next goal is to add key decrease as an $\bigO(1)$-time operation.
Once key decrease is supported, one can delete an arbitrary item by
decreasing its key to $-\infty$ and doing a minimum deletion.  

A parameter of both key decrease and arbitrary deletion is the heap
containing the given item.  If the application does not provide this
information and melds occur, one needs a separate data structure to
maintain the partition of items into heaps.  With such a data
structure, the time to find the heap containing a given item is small
but not $\bigO(1)$~\cite{Kaplan2002}.

Fibonacci heaps~\cite{Fredman1987} were invented specifically to
support key decrease efficiently, but they require two extra pointers
per node (in the heap-ordered representation, to the parent and
previous sibling) and they are slower in practice than other heap
implementations~\cite{Liao1992,Moret1994}.  All the alternatives to
Fibonacci heaps with the same amortized efficiency,
including~\cite{driscoll1988rha,violation_heaps,hyer1995gti,1328914,Peterson1987},
have the property that decreasing a key can cause a cascade of tree
restructuring, possibly delayed.  But such restructuring is
unnecessary, as we show by introducing a new data structure, the
rank-pairing heap, or rp-heap, in which the trees have arbitrary
structure and the only cascading is of rank changes.

To obtain rp-heaps, we modify one-pass binomial queues to support key
decrease.  To the half-ordered representation we add parent pointers.
We also relax the requirement that the rank of a child be exactly one
less than that of its parent.  Let $p(x)$ and $r(x)$ be the parent and
rank of node $x$, respectively.  The {\em rank difference} of $x$ is
$r(p(x)) - r(x)$; the rank difference is defined only if $x$ has a
parent.  We maintain the ranks so that each leaf has rank zero and the
two children of a non-root have rank differences $1$ and $1$, or $0$
and at least $1$.  (One of these children can be missing; the rank of
a missing node is $-1$.)  We call a half tree with ranks that obey
this rule a {\em type-1 half tree}, and we call a set of heap-ordered
type-1 half trees a {\em type-1 rank-pairing heap} or {\em rp-heap}.

Lemma \ref{lem:b-heap-size} extends to type-1 half trees, so the
maximum rank of a node in a type-1 rp-heap is $\lg n$, and the number
of bits needed to store ranks is $\lg\lg n$ per node, matching
Fredman's lower bound.

We represent an rp-heap by a singly-linked circular list of the roots
of its half trees, with the minimum node first.  We implement making a
heap, insertion, melding, and minimum deletion exactly as in one-pass
binomial queues.  Fair matches maintain the rank rule. (See Figure
\ref{fig:match}.)

We implement key decrease as follows.  (See Figure
\ref{fig:decrease-key}.)  To decrease the key of item $x$ in rp-heap
$H$ by $\RDelta$, subtract $\RDelta$ from the key of $x$ and update the
minimum node.  If $x$ is a root, stop.  Otherwise, let $y$ be the
unordered child of $x$.  Detach the subtrees rooted at $x$ and $y$,
and reattach the subtree rooted at $y$ in place of the original
subtree rooted at $x$.  Add the half tree rooted at $x$ to the set of
trees representing $H$.  There may now be a violation of the rank rule
at $p(y)$, whose new child, $y$, may have lower rank than $x$, the
node it replaces.  To check for a violation and restore the rank rule
if necessary, let $u = p(y)$ and repeat the following step until it
stops: \\

\noindent {\em Decrease rank (type 1)}: If $u$ is the root, stop.
Otherwise, let $v$ and $w$ be the children of $u$.  Let $k$ equal
$r(v)$ if $r(v) > r(w)$, $r(w)$ if $r(w) > r(v)$, or $r(w) + 1$ if
$r(v) = r(w)$.  If $k = r(u)$, stop.  Otherwise, let $r(u) = k$ and $u
= p(u)$. \\

\begin{figure}[h]
\centering
\includegraphics[scale=0.4]{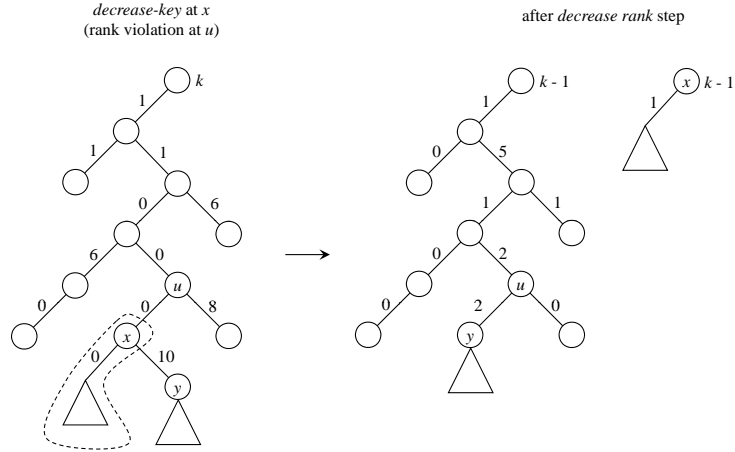}
\caption{Key decrease in a type-1 rp-heap.}
\label{fig:decrease-key}
\end{figure}

If $u$ breaks the rank rule, it obeys the rule after its rank is
decreased, but $p(u)$ may not.  Rank decreases propagate up along a
path through the tree until all nodes obey the rule. Each successive
rank decrease is by the same or a smaller amount.  The definition of
{\em decrease rank} assumes that ranks of roots are not maintained
explicitly.

If we mildly restrict the fair matches done during minimum deletions,
we can prove that type-1 rp-heaps match the efficiency of Fibonacci
heaps, but the analysis is complicated and yields larger constant
factors than one would like.  Before doing this, we consider a variant
obtained by relaxing the rank rule.  Its analysis is simpler, with
small constant factors, and we expect it to be efficient in practice.

The new rank rule is that each leaf has rank zero, and the two
children of a non-root have rank differences 1 and 1, or 1 and 2, or 0
and at least 2.  We call a half tree with ranks that obey this rule a
{\em type-2 half tree}, and we call a set of heap-ordered type-2 half
trees a {\em type-2 rank-pairing heap}.  The implementations of the
heap operations are exactly the same on type-2 rp-heaps as on type-1
rp-heaps, except for key decrease, which is the same except that the
rank decrease step becomes the following: \\

\noindent {\em Decrease rank (type 2)}: If $u$ is the root, stop.  Otherwise,
let $v$ and $w$ be the children of $u$.  Let $k$ equal $r(v)$ if $r(v)
> r(w) + 1$, $r(w)$ if $r(w) > r(v) + 1$, or $\max\{r(v) + 1, r(w) +
1\}$ otherwise.  If $k = r(u)$, stop.  Otherwise, let $r(u) = k$ and
$u = p(u)$. \\

\begin{lemma} \label{lem:u-heap-size}
A type-2 half tree of rank $k$ contains at least $\phi^k$ items,
where $\phi = (1 + \sqrt{5})/2$ is the golden ratio.
\end{lemma}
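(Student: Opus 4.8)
The plan is to prove this by strong induction on the rank $k$, mirroring the structure of the proof of Lemma~\ref{lem:b-heap-size} but using the weaker type-2 rank rule, which forces a Fibonacci-style recurrence rather than a doubling recurrence. Let $S_k$ denote the minimum possible number of items (equivalently, nodes, since the representation is endogenous) in a type-2 half tree whose root has rank $k$. The base cases are $k=0$, where a single leaf gives $S_0 \ge 1 = \phi^0$, and $k=1$, where I would check directly from the rank rule that $S_1 \ge 2 \ge \phi$. The goal is then to establish a recurrence of the form $S_k \ge S_{k-1} + S_{k-2}$ for $k \ge 2$, from which $S_k \ge \phi^k$ follows by the standard Fibonacci/golden-ratio induction (using $\phi^2 = \phi + 1$).

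The heart of the argument is extracting that recurrence from the type-2 rank rule. The root of rank $k$ has (in the half-tree view) an ordered child and an unordered child; I would focus on the structure at a non-root node, since the rule is stated for non-roots. The key move is to account correctly for the root: the paper's convention is that a root has rank one larger than that of its child, so a half tree of rank $k$ consists of the root together with the subtree rooted at its single child, whose rank is $k-1$. That child is a non-root, so its two children have rank differences $1$ and $1$, or $1$ and $2$, or $0$ and at least $2$. In each of these three cases I would identify two disjoint subtrees (the two grandchildren's subtrees, or one grandchild's subtree together with the child's own remaining subtree) whose ranks are at least $k-2$ and $k-3$, or whatever the worst case demands, and invoke the induction hypothesis on each. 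The worst case should be the one minimizing the guaranteed size; I expect it to yield two disjoint subtrees of ranks at least $k-1$ and $k-2$ (or an equivalent pairing), giving $S_k \ge S_{k-1} + S_{k-2}$.

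The main obstacle, and the step requiring the most care, is the case analysis over the three allowed rank-difference patterns, because the type-2 rule permits the asymmetric combination ``$0$ and at least $2$,'' where one child shares its parent's rank. I must verify that even in this degenerate configuration the two subtrees collectively still contain at least $\phi^k$ items, and that I am not double-counting nodes or mishandling missing children (which have rank $-1$ by convention). I would treat the ``$1$ and $1$'' pattern as the tight Fibonacci case that drives the bound, and check that ``$1$ and $2$'' and ``$0$ and at least $2$'' each give at least as large a guaranteed size, so that the binding recurrence is $S_k \ge S_{k-1} + S_{k-2}$. Once the recurrence and base cases are in hand, the closing inductive step $S_k \ge \phi^{k-1} + \phi^{k-2} = \phi^{k-2}(\phi + 1) = \phi^{k-2}\phi^2 = \phi^k$ is routine and I would state it without grinding through it.
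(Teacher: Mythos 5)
Your outline follows the same route as the paper's proof: decompose a rank-$k$ half tree (root $r$ whose only child $c$ has rank $k-1$) into two type-2 half trees---$r$ together with the subtree of one grandchild, and $c$ together with the subtree of the other grandchild---then extract the Fibonacci recurrence $S_0=1$, $S_1=2$, $S_k \ge S_{k-1}+S_{k-2}$, and finish with $\phi^2=\phi+1$. Your base cases, your decomposition (``one grandchild's subtree together with the child's own remaining subtree'' is exactly the paper's move), and your closing algebra all match.

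The gap is in the induction scheme itself. You do strong induction on the rank and plan to ``invoke the induction hypothesis on each'' of the two pieces, but in the ``$0$ and at least $2$'' case---precisely the case you flag as the main obstacle---this cannot be executed. There the rank-difference-$0$ grandchild has rank $k-1$, so the piece that adopts it as its root's child is a type-2 half tree of rank exactly $k$, the same rank you are trying to bound, while the other piece can have rank as low as $0$ (the grandchild of rank difference at least $2$ may even be missing). So your expectation of ``two disjoint subtrees of ranks at least $k-1$ and $k-2$'' is false in this case, and the induction hypothesis, which covers only ranks below $k$, says nothing about the rank-$k$ piece: the deferred verification is exactly where rank induction becomes circular. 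Two standard repairs, either of which completes your argument: (i) induct on the number of nodes rather than on the rank---both pieces are strictly smaller, so the hypothesis applies to each piece at its own rank, and the ``$0$ and $\ge 2$'' case then gives at least $\phi^k + 1$ items, hence never binds; or (ii) use the minimality already built into your definition of $S_k$: a minimum-size rank-$k$ half tree cannot decompose into a piece that is itself a rank-$k$ half tree with strictly fewer items, so minimal trees fall under the ``$1,1$'' or ``$1,2$'' patterns and the recurrence follows (you need $S_{k-1}\ge S_{k-2}$ to settle the ``$1,1$'' case). One small correction as well: the tight pattern is ``$1$ and $2$,'' which yields $\phi^{k-1}+\phi^{k-2}=\phi^k$ exactly; the ``$1$ and $1$'' pattern gives $2\phi^{k-1} > \phi^k$, so it is not the case that drives the bound.
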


\begin{proof}
A type-2 half tree whose root is not a leaf can be decomposed into two
type-2 half trees by detaching the ordered subtree of the root,
detaching the unordered subtree of the old child of the root, and
making the latter tree into the ordered subtree of the root.  It
follows from the rank rule that the minimum number of items $n_k$ in a
type-2 half tree of rank $k$ satisfies the recurrence $n_0 = 1, n_1 =
2, n_k = n_{k-1} + n_{k-2}$ for $k \ge 2$. This is the recurrence for
the Fibonacci numbers $F_k$, offset by two. Hence $n_k = F_{k+2} \ge
\phi^k$~\cite[p. 18]{Knuth1973}. \end{proof}

To analyze type-2 rp-heaps, we call a node {\em bad} if it has an
ordered child of rank difference two or more and {\em good} otherwise.
A node of rank one or more with a missing ordered child is bad.  Roots
are good.  Bad nodes can cause extra matches during minimum deletions;
they are created only by key decrease operations.

To analyze the heap operations, we define the potential of a node of
rank $k$ to be $k$, $k + 1$, or $k + 2$ if it is a good child, a bad
child, or a root, respectively.  We define the potential of a heap to
be the sum of the potentials of its nodes.

A make-heap, find-min, or meld operation takes $\bigO(1)$ time and does
not change the potential.  An insertion takes $\bigO(1)$ time and creates
one new root of rank zero, for a potential increase of 2.  Hence each
of these operations takes $\bigO(1)$ amortized time.

Consider the deletion of a minimum node $x$ of rank $k$.  Deletion of
$x$ reduces the potential by $k + 2$.  Consider the effect on the
potential of disassembling the half tree rooted at $x$.  Let $y$ be a
node on the path from $\ord(x)$ descending through unordered children.
There are three cases.  If $\unord(y)$ has rank difference zero, then
$y$ is bad, and making it into a root cannot increase its potential.
(Its rank decreases by at least $1$.)  If $\unord(y)$ has rank
difference $1$, $\ord(y)$ has rank difference at least one.  In this
case making $y$ into a root can increase its potential by at most $2$.
If $\unord(y)$ has rank difference $2$ or more, making $y$ into a root
can increase its potential by at most $3$.  In either of the last two
cases, $\unord(y)$ can be missing. It follows that the new roots
increase in potential by a total of at most $2k$, and the net
potential increase caused by the disassembly is at most $k - 2 \le
\log_{\phi} n - 2$ by Lemma \ref{lem:u-heap-size}.

Let $h$ be the number of half trees in the heap after the disassembly.
The entire minimum deletion takes $\bigO(h)$ time.  A fair match of
two half trees of rank $j$ results in a new half tree of rank $j + 1$
whose root has a child of rank $j$, resulting in a net potential drop
of one.  There are at least $(h - \log_{\phi} n)/2 - 1$ fair matches,
reducing the potential by at least this much.  The decrease in
potential caused by the entire minimum deletion is thus at least $h/2
- (3/2)\log_{\phi} n$.  If we scale the time for the minimum
deletion so that it is at most $h/2 + \bigO(\log n)$, then its
amortized time is $\bigO(\log n)$.
 
It remains to analyze key decrease.  Consider a key decrease at a node
$x$ of rank $k$.  If $x$ is a root, the key decrease takes $\bigO(1)$
time and has no effect on the potential.  Suppose $x$ is not a root.
Breaking the half tree containing $x$ into two half trees takes
$\bigO(1)$ time and increases the potential by at most 4: node $x$
begins with potential at least $k$ but once it becomes a root it has
potential at most $k + 3$, and the old parent of $x$ may become bad,
either because its ordered child is now missing or because its new
ordered child has smaller rank than its old ordered child.  
We scale the time so that each iteration of {\em decrease rank} takes
time at most 1.  Each iteration either decreases the rank of a node or
is the last iteration, so the amortized time of an iteration other
than the last is at most zero, not counting increases in potential
caused by newly bad nodes.  A node $u$ other than the old parent of
node $x$ can only become bad if $r(\ord(u))$ decreases by some amount
and then $r(u)$ decreases by less.  If $r(\ord(u))$ decreases by at
least 2, the corresponding potential decrease pays for both the rank
decrease step at $\ord(u)$ and for $u$ becoming bad.  If $r(\ord(u))$
decreases by 1, the rank decrease step at $u$ is the last. It follows
that key decrease takes $\bigO(1)$ amortized time.

The maximum rank for a given heap size is the same for type-2 rp-heaps
as for Fibonacci heaps, namely at most $\log_{\phi} n$.  The
worst-case time for a key decrease in an rp-heap (of either type) is
$\Thta(n)$, as it is in a Fibonacci heap~\cite{Fredman1987}.  One can
reduce the worst-case time for a key decrease to $\bigO(1)$ by
delaying each such operation until the next minimum deletion.  This
requires keeping a list of possible minimum nodes that includes all
the roots and all the nodes whose keys have decreased since the last
minimum deletion. Making this change complicates the data structure
and is likely to worsen its performance in practice.

Now we turn to the analysis of type-1 rp-heaps.  We use the following
terminology.  A node is an $i$-{\em child} if it has rank difference
$i$, and an $i, j$-{\em node} if its children have rank differences
$i$ and $j$; this definition does not distinguish between ordered and
unordered children.  A leaf is a 1,1-node.  We call a node {\em green}
if it is a leaf but not a root, or it and both of its children are
1,1-nodes; {\em yellow} if it is a 0,1 node whose 0-child is a 1,
1-node, or it is a root with no child or with a 1, 1-child; and {\em
red} otherwise.  Red nodes can cause extra matches during minimum
deletions; they are created only by key decrease operations.  Yellow
nodes block the propagation of rank decreases: a rank decrease step on
a yellow node is terminating.  A key decrease can turn green nodes
into yellow nodes but not red ones, and can turn at most one yellow
node into a red node.

Our analysis requires a restriction on the fair matches.  We consider
two such restrictions; the first has a simpler analysis and the second
has a simpler, natural implementation.  First we assume that the fair
matches done during a minimum deletion preferentially match as many
red roots as possible.  One way to guarantee this is to maintain a
bucket for each rank, initially empty.  Process the half trees
one-at-a-time, beginning with the ones with red roots and finishing
with the others.  To process a half tree, insert it into the bucket
for its rank if this bucket is empty; if not, do a fair match of its
root with the root of the half tree in the bucket, and add the
resulting half tree to the output set, leaving the bucket empty.  Once
all the half trees have been processed, add to the output set any tree
remaining in a bucket.  This process assures that there is at most one
match per rank between a red root and a yellow root.

We define the potential of a node of rank $k$ to be $k$, $k + 2$, or
$k + 4$ if it is a green or yellow child, a yellow root, or it is red,
respectively.  The potential of a heap is the sum of the potentials of
its nodes.

The analysis of make-heap, find-min, meld, and insert is exactly the
same as for type-2 rp-heaps: each such operation takes $\bigO(1)$
amortized time; only insertions increase the potential, by two units
per insertion.

Consider the deletion of a minimum node $x$ of rank $k$.  Deletion of
$x$ reduces the potential by at least $k + 2$.  Disassembly of the
half tree rooted at $x$ produces a set of half trees, one per node on
the path from $\ord(x)$ descending through unordered children.  The
nodes on this path become roots; they are the only nodes whose
potentials can change.  Except for at most 2 units of potential, we
shall charge each such increase to a rank between 0 and $k - 1$, at
most 4 units per rank.  The total charge is then at most $4k + 2$, so
the potential increase caused by the disassembly is at most $4k + 2 -
(k + 2) = 3k$.

Let $y$ be a node on the path.  We consider five cases, one of which
has two subcases.  If $y$ is red and $\ord(y)$ is not a 0-child, then
its potential does not increase when it becomes a root; indeed, if $y$
becomes yellow, its potential drops by at least 2.  If $y$ is red and
$\ord(y)$ is a 0-child, then the potential of $y$ increases by 1 if it
becomes a red root and drops by 1 if it becomes a yellow root: its
rank increases by 1.  If $y$ is yellow and $\ord(y)$ is a 0-child, the
potential of $y$ increases by 3 when it becomes a root: its rank
increases by 1 and it stays yellow.  If $y$ is green, its potential
increases by 2 when it becomes a root: it becomes yellow.  In each of
these cases we charge any increase in potential to the rank of $y$.

If $y$ is yellow and $\unord(y)$ is a 0-child, the potential of $y$
increases by 4 if it becomes a red root or by 2 if it becomes a yellow
root.  From $y$ walk down the path of unordered children until coming
to a node $z$ that is not green or is the last node on the path.  Node
$z$ is a 1,1-node, so it is not yellow.  If it is red, its potential
does not increase when it becomes a root, since its rank does not
change.  We charge the potential increase of $y$ to the rank of $z$.
If $z$ is green, it is a leaf.  We charge 2 units of the potential
increase of $y$ to rank zero; any remainder (2 units) is uncharged.
The total charge to rank zero is 4, 2 for $y$ and 2 for $z$.  As
claimed, each rank is charged at most 4 units and at most 4 units are
uncharged.  We conclude that the potential increase caused by the
disassembly is at most $3k \le 3\lg n$.

Let $h$ be the number of half trees in the heap after the disassembly.
The entire minimum deletion takes $\bigO(h)$ time.  There are at least
$(h - \lg n)/2 - 1$ fair matches.  Each match of two red roots reduces
the potential by 1 since the winner becomes yellow; so does each match
of two yellow roots.  A match between a yellow root and a red root can
increase the potential by at most 1, but there are at most $\lg n$
such matches.  Thus the minimum deletion reduces the potential by at
least $(h - \lg n)/2 - 5\lg n - 1 = h/2 - (11/2)\lg n - 1$.  If we
scale the time for the minimum deletion so that it is at most $h/2$,
then the amortized time is $\bigO(\log n)$.

Finally, consider decreasing the key of node $x$.  The only nodes
whose potential can increase are $x$ and its ancestors.  If
$\unord(x)$ is a 0-child, there is only one rank decrease step, and
$x$ becoming a root increases its potential by at most 4.  If
$\unord(x)$ is not a 0-child, $x$ becoming a root increases its
potential by at most 3: its rank may increase by 1, but it cannot
become red.  (It may already be red.)  Let $u$ be a proper ancestor of
$x$ that becomes red as a result of the key decrease.  Node $u$ cannot
be green before the key decrease, it must be yellow.  If it is yellow,
its rank cannot decrease unless it is a root.  Thus it is the last
node subject to a key decrease step; its potential increases by at
most 4.  Each key decrease step except the last decreases the rank of
a node and hence the potential by at least 1.  If we scale the actual
time of a rank decrease step so it is at most 1, then the amortized
time of such a step is at most zero unless it is the last step.  Thus
key decrease takes $\bigO(1)$ amortized time.

A more natural way to do matches during a minimum deletion is to
preferentially pair half trees produced by the disassembly.  To do
this, use the same bucketing scheme as in the method that
preferentially pairs red roots, but process the half trees produced by
the disassembly first, in the order they are produced, followed by the
remaining half trees.  This method avoids the need to check root
colors, at an extra cost of at most 2 units of potential per key
decrease.

In order to analyze the method, we need a more elaborate potential
function.  We call a root {\em fresh} if it has just been created by a
disassembly, its rank is at least 1, and it has not yet participated
in a match.  A fresh root becomes {\em stale} once it wins a match or
the minimum deletion is finished.  Thus all roots are stale except in
the middle of minimum deletions.  We define the potential of a node of
rank $k$ to be $k$, $k + 2$, $k + 4$, or $k + 6$ if it is a green or
yellow child, a stale yellow root, a red child or a fresh root, or a
stale red root, respectively.  This gives fresh yellow roots and stale
red roots two extra units of potential.

These definitions have the following effect.  The winner of a fair
match is a stale yellow root.  Such a match reduces the potential by
at least 1 unless it matches a fresh red root against a stale yellow
root, in which case it increases the potential by at most 1.  After
the matches between fresh roots, each remaining fresh red root is
responsible for up to 2 units of extra potential, either because it is
matched against a stale yellow root or it becomes stale without being
matched.  There is at most one such root per positive rank.  We charge
the extra 2 units against the rank.

The analysis of make-heap, find-min, meld, and insert does not change.
The analysis of key decrease is the same except that the potential can
increase by an additional 2 units because a stale red root has 2 extra
units of potential; at most one node needs 2 extra units, the node
whose key decreases or the root of the original tree containing it,
but not both.

Consider the deletion of a minimum node $x$ of rank $k$.  In the
disassembly process we need to account for 2 extra units of potential
for each fresh yellow root.  This extra potential does not materially
affect the analysis unless the new root y was previously yellow and
$\ord(y)$ was a 0-child.  In this case y must be the last fresh node
of its rank, which means that there is no fresh red root of this rank
remaining after the matches between fresh roots.  We charge the extra
2 units against the rank.  The analysis of the matching process
proceeds as before; the constant factors are exactly the same.  The
extra 2 units of potential per rank in the new analysis correspond to
the at most $\lg n$ matches between red and yellow roots in the old
analysis.  We conclude that minimum deletion takes $\bigO(\log n)$
amortized time.

\section{Simpler Key Decreases?} \label{sec:simpler-kd}

It is natural to ask whether there is an even simpler way to decrease
keys while retaining the amortized efficiency of Fibonacci heaps.  We
give two answers: ``no'' and ``maybe''. We answer ``no'' by showing
that two possible methods fail.  The first method allows arbitrarily
negative but bounded positive rank differences.  With such a rank
rule, the rank decrease process following a key decrease need examine
only ancestors of the node whose key decreases, not their siblings.
Such a method can take $\Omga(\log n)$ time per key decrease, however,
as the following counterexample shows.  Let $b$ be the maximum allowed
rank difference.  Choose $k$ arbitrarily.  By means of a suitable
sequence of insertions and minimum deletions, build a heap that
contains a perfect half tree of each rank from 0 through $bk + 1$.
Let $x$ be the root of the half tree of rank $bk + 1$.  Consider the
path of unordered children descending from $\ord(x)$.  Decrease the
key of each node on this path whose rank is not divisible by $b$.
Each such key decrease takes $\bigO(1)$ time and does not violate the
rank rule, so no ranks change.  Now the path consists of $k + 1$
nodes, each with rank difference $b$ except the topmost.  Decrease the
keys of these nodes, smallest rank to largest.  Each such key decrease
will cause a cascade of rank decreases all the way to the topmost node
on the path.  The total time for these $k + 1$ key decreases is
$\Omga(k^2)$.  After all the key decreases the heap contains three
perfect half trees of rank zero and two of each rank from 1 through
$bk$.  A minimum deletion (of one of the roots of rank zero) followed
by an insertion makes the heap again into a set of perfect half trees,
one of each rank from 0 through $bk + 1$.  Each execution of this
cycle does $\bigO(\log n)$ key decreases, one minimum deletion, and
one insertion, and takes $\Omga(\log^2 n)$ time.

The second, even simpler method spends only $\bigO(1)$ time worst-case
on each key decrease, thus avoiding arbitrary cascading.  In this case
by doing enough operations one can build a half tree of each possible
rank, up to a rank that is $\omga(\log n)$.  Once this is done,
repeatedly doing an insertion followed by a minimum deletion (of the
just-inserted item) will result in each minimum deletion taking
$\omga(\log n)$ time.  Here are the details.  Suppose each key
decrease changes the ranks of nodes at most $d$ pointers away from the
node whose key decreases, where $d$ is fixed.  Choose $k$ arbitrarily.
By means of a suitable sequence of insertions and minimum deletions,
build a heap that contains a perfect half tree of each rank from 0
through $k$.  On each node of distance $d + 2$ or greater from the
root, in decreasing order by distance, do a key decrease with $\RDelta
= \infty$ followed by a minimum deletion.  No roots can be affected by
any of these operations, so the heap still consists of one half tree
of each rank, but each heap contains only $2^{d + 1}$ nodes, so there
are $\lfloor n/2^{d + 1} \rfloor$ half trees.  Now repeat the cycle of
an insertion followed by a minimum deletion.  Each such cycle takes
$\Omga(n/2^{d + 1})$ time.  The choice of ``$d + 2$'' in this
construction guarantees that no key decrease can reach the child of a
root, which implicitly stores the rank of the root.

This construction works even if we add extra pointers to the half
trees, as in Fibonacci heaps.  In a half tree, the {\em ordered
ancestor} of a node $x$ is the parent of the nearest ancestor of $x$
(including $x$ itself) that is an ordered child.  The ordered ancestor
corresponds to the parent in the heap-ordered representation.  Suppose
we augment half trees with ordered ancestor pointers.  Even for such
an augmented structure, the latter construction gives a bad example,
except that the size of a constructed half tree of rank $k$ is
$\bigO(k^{d + 1})$ instead of $\bigO(2^{d + 1})$, and each cycle of an
insertion followed by a minimum deletion takes $\Omga(n^{1/(d + 2)})$
time.

One limitation of this construction is that building the initial set
of half trees takes a number of operations exponential in the size of
the heap on which repeated insertions and minimum deletions are done.
Thus it is not a counterexample to the following question: is there a
fixed $d$ such that if each key decrease is followed by at most $d$
rank decrease steps (say of type 1), then the amortized time is
$\bigO(1)$ per insert, meld, and key decrease, and $\bigO(\log m)$ per
deletion, where $m$ is the total number of insertions?  A related
question is whether Fibonacci heaps without cascading cuts have these
bounds.  We conjecture that the answer is yes for some positive $d$,
perhaps even $d = 1$.  The following counterexample shows that the
answer is no for $d = 0$; that is, for the method in which a key
decrease changes no ranks except for the implicit ranks of roots.  For
arbitrary $k$, build a half tree of each rank from 0 through $k$, each
consisting of a root and a path of ordered children, inductively as
follows.  Given such half trees of ranks 0 through $k - 1$, insert an
item less than all those in the heap and then do $k$ cycles, each
consisting of an insertion followed by a minimum deletion that deletes
the just-inserted item.  The result will be one half tree of rank $k$
consisting of the root, a path of ordered children descending from the
root, a path $P$ of unordered children descending from the ordered
child of the root, and a path of ordered children descending from each
node of $P$; every child has rank difference 1.  (See Figure
\ref{fig:counter-ex1}.)  Do a rank decrease on each node of P.  Now
there is a collection of half trees of rank 0 through $k$ except for
$k - 1$, each a path.  Repeat this process on the set of half trees up
to rank $k - 2$, resulting in a set of half trees of ranks 0 through
$k$ with $k - 2$ missing.  Continue in this way until only rank 0 is
missing, and then do a single insertion.  Now there is a half tree of
each rank, 0 through $k$.  The total number of heap operations
required to increase the maximum rank from $k - 1$ to $k$ is
$\bigO(k^2)$, so in $m$ heap operations one can build a set of half
trees of each possible rank up to a rank that is $\Omga(m^{1/3})$.
Each successive cycle of an insertion followed by a minimum deletion
takes $\Omga(m^{1/3})$ time.

\begin{figure}[h]
\centering
\includegraphics[scale=0.4]{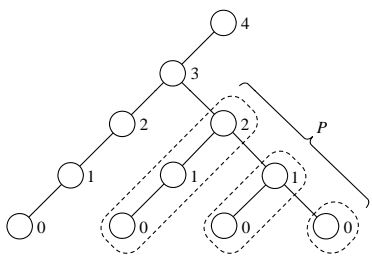}
\caption{A half tree of rank $k = 4$ buildable in $\bigO(k^3)$
operations if key decreases do not change ranks.  Key decreases on the
unordered children detach the circled subtrees.}
\label{fig:counter-ex1}
\end{figure}

\section{Remarks} \label{sec:remarks}

We have presented a new data structure, the rank-pairing heap, that
combines the performance guarantees of Fibonacci heaps with simplicity
approaching that of pairing heaps.  Our preliminary experiments
suggest that rank-pairing heaps may be competitive with pairing heaps
in practice, and we intend to do more-thorough experiments.  Several
theoretical questions remain: Can the restriction on fair matches in
type-1 rp-heaps be removed?  Can the constant factors in the analysis
of rp-heaps be reduced?  How is efficiency affected if only a constant
number of rank decrease steps are done after a key decrease?  Is there
a nice one-tree version of rp-heaps analogous to balanced tournaments?
Can Brodal's worst-case-efficient heap implementation be simplified?

\section*{Acknowledgement} \label{sec:ack}

We thank Haim Kaplan and Uri Zwick for extensive discussions that
helped to clarify the ideas in this paper, and for pointing out an
error in our original analysis of type-1 rp-heaps.

%
%
\bibliography{heaps}

\begin{thebibliography}{10}

\bibitem{brodal1996wce}
G.~Brodal.
\newblock {Worst-case efficient priority queues}.
\newblock In {\em SODA}, pages 52--58. Society for Industrial and Applied
  Mathematics Philadelphia, PA, USA, 1996.

\bibitem{Brown1978}
M.~R. Brown.
\newblock Implementation and analysis of binomial queue algorithms.
\newblock {\em SIAM J. on Comput.}, pages 298--319, 1978.

\bibitem{dijkstra1959ntp}
E.~Dijkstra.
\newblock {A note on two problems in connexion with graphs}.
\newblock {\em Numerische Mathematik}, 1(1):269--271, 1959.

\bibitem{driscoll1988rha}
J.~R. Driscoll, H.~N. Gabow, R.~Shrairman, and R.~E. Tarjan.
\newblock {Relaxed heaps: an alternative to Fibonacci heaps with applications
  to parallel computation}.
\newblock {\em Comm. of the ACM}, 31(11):1343--1354, 1988.

\bibitem{dutton:whd}
R.~Dutton.
\newblock {The weak-heap data structure}.
\newblock Technical report, Technical Report CS-TR-92-09, University of Central
  Florida, Orlando, FL 32816, 1992.

\bibitem{Edmonds1967}
{Edmonds,, J.}
\newblock Optimum branchings.
\newblock {\em J. Res. Nat. Bur. Standards}, B71:233--240, 1967.

\bibitem{violation_heaps}
A.~Elmasry.
\newblock Violation heaps: A better substitute for fibonacci heaps.
\newblock {\em CoRR}, abs/0812.2851, 2008.

\bibitem{1496822}
A.~Elmasry.
\newblock Pairing heaps with ${O}(\log\log n)$ decrease cost.
\newblock In {\em SODA}, pages 471--476. Society for Industrial and Applied
  Mathematics, 2009.

\bibitem{320214}
M.~L. Fredman.
\newblock On the efficiency of pairing heaps and related data structures.
\newblock {\em J. ACM}, 46(4):473--501, 1999.

\bibitem{fredman1986phn}
M.~L. Fredman, R.~Sedgewick, D.~D. Sleator, and R.~E. Tarjan.
\newblock {The pairing heap: A new form of self-adjusting heap}.
\newblock {\em Algorithmica}, 1(1):111--129, 1986.

\bibitem{Fredman1987}
M.~L. Fredman and R.~E. Tarjan.
\newblock Fibonacci heaps and their uses in improved network optimization
  algorithms.
\newblock {\em J. of the ACM}, 34(3):596--615, 1987.

\bibitem{Fredman1994}
M.~L. Fredman and D.~E. Willard.
\newblock Trans-dichotomous algorithms for minimum spanning trees and shortest
  paths.
\newblock {\em J. of Comput. and Sys. Sci.}, pages 533--551, 1994.

\bibitem{Gabow1986}
H.~N. Gabow, Z.~Galil, T.~H. Spencer, and R.~E. Tarjan.
\newblock Efficient algorithms for finding minimum spanning trees in undirected
  and directed graphs.
\newblock {\em Combinatorica}, 6(2):109--122, 1986.

\bibitem{han2002isn}
Y.~Han and M.~Thorup.
\newblock Integer sorting in ${O}(n\sqrt{\log\log n})$ expected time and linear
  space.
\newblock In {\em FOCS}, pages 135--144, 2002.

\bibitem{hyer1995gti}
P.~H{\o}yer.
\newblock {A general technique for implementation of efficient priority
  queues}.
\newblock In {\em ISTCS}, pages 57--66, 1995.

\bibitem{Jones1986}
D.~W. Jones.
\newblock An empirical comparison of priority-queue and event-set
  implementations.
\newblock {\em Communications of the ACM}, 29(4):300--311.

\bibitem{Kaplan2002}
H.~Kaplan, N.~Shafrir, and R.~E. Tarjan.
\newblock Meldable heaps and boolean union-find.
\newblock In {\em STOC}, pages 573--582, 2002.

\bibitem{kaplan1999nhd}
H.~Kaplan and R.~E. Tarjan.
\newblock {New heap data structures}.
\newblock Technical report, Technical Report TR-597-99, Department of Computer
  Science, Princeton University, 1999.

\bibitem{1328914}
H.~Kaplan and R.~E. Tarjan.
\newblock Thin heaps, thick heaps.
\newblock {\em ACM Trans. Algorithms}, 4(1):1--14, 2008.

\bibitem{Knuth1973}
D.~E. Knuth.
\newblock {\em The Art of Computer Programming, Volume 1: Fundamental
  Algorithms}.
\newblock Addison-Wesley, 1973.

\bibitem{Knuth1973b}
D.~E. Knuth.
\newblock {\em The Art of Computer Programming, Volume 3: Sorting and
  Searching}.
\newblock Addison-Wesley, 1973.

\bibitem{Liao1992}
A.~M. Liao.
\newblock Three priority queue applications revisited.
\newblock {\em Algorithmica}, 7:415--427, 1992.

\bibitem{Moret1994}
B.~Moret and H.~Shapiro.
\newblock {An empirical analysis of algorithms for constructing a minimum
  spanning tree}.
\newblock {\em DIMACS Series in Disc. Math. and Theor. Comput. Sci.},
  15:99--117, 1994.

\bibitem{Peterson1987}
G.~L. Peterson.
\newblock A balanced tree scheme for meldable heaps with updates.
\newblock Technical Report GIT-ICS-87-23, School of Informatics and Computer
  Science, Georgia Institute of Technology, 1987.

\bibitem{Pettie_pairingheaps}
S.~Pettie.
\newblock Towards a final analysis of pairing heaps.
\newblock pages 174--183. IEEE Computer Society, 2005.

\bibitem{Stasko1987}
J.~T. Stasko and J.~S. Vitter.
\newblock Pairing heaps: experiments and analysis.
\newblock {\em Commun. ACM}, 30(3):234--249, 1987.

\bibitem{Tarjan1983}
R.~E. Tarjan.
\newblock {\em Data Structures and Network Algorithms}.
\newblock SIAM, 1983.

\bibitem{tarjan1985acc}
R.~E. Tarjan.
\newblock Amortized computational complexity.
\newblock {\em SIAM J. on Algebraic and Disc. Methods}, 6:306, 1985.

\bibitem{359478}
J.~Vuillemin.
\newblock A data structure for manipulating priority queues.
\newblock {\em Commun. ACM}, 21(4):309--315, 1978.

\end{thebibliography}
\bibliographystyle{abbrv}

\end{document}